\newcommand*{\rom}[1]{\expandafter\@slowromancap\romannumeral #1@}
\title{Infinite separation between general and chromatic memory}
\author{Alexander Kozachinskiy\footnote{The author is funded by ANID - Millennium Science Initiative Program - Code ICN17002, and the National Center for Artificial Intelligence CENIA
FB210017, Basal ANID.} \\ IMFD \& CENIA, Chile\\ alexander.kozachinskyi@cenia.cl}
\newtheorem{definition}{Definition}
\newtheorem{theorem}{Theorem}
\newtheorem{lemma}{Lemma}
\newtheorem{proposition}{Proposition}
\newtheorem{remark}{Remark}
\newtheorem{question}{Question}
\newtheorem{problem}{Problem}
\newtheorem{fact}{Fact}
\newcommand{\col}{\mathsf{col}}
\newcommand{\genmem}{\mathsf{GenMem}}
\newcommand{\chrmem}{\mathsf{ChrMem}}
\newcommand{\RL}{\mathsf{RL}}
\newcommand{\zero}{\mathbf{\overline{0}}}
\newcommand{\ant}{\mathsf{ant}}
\newcommand{\layer}{\mathsf{layer}}
\newcommand{\fp}{\mathsf{FinitePlays}}
\newcommand{\ip}{\mathsf{InfinitePlays}}
\newcommand{\source}{\mathsf{source}}
\newcommand{\target}{\mathsf{target}}
\begin{document}
\maketitle

\begin{abstract}
In this paper, we construct a winning condition $W$ over a finite set of colors such that, first, every finite arena has a strategy with 2 states of general memory which is optimal w.r.t.~$W$, and second, there exists no $k$ such that every finite arena has a strategy with $k$ states of chromatic memory which is optimal w.r.t.~$W$.
\end{abstract}
\section{Introduction}

Memory requirements for games on graphs have been studied for decades. Initially, these studies were motivated by applications to automata theory and the decidability of logical theories. For example, the memoryless determinacy of parity games is a key ingredient for the complementation of tree automata and leads to the decidability of
the monadic second-order theory of trees~\cite{zielonka1998infinite}.  Recently, games on graphs have become an important tool in reactive synthesis~\cite{bloem2018graph}. They serve there as a model of the interaction between a reactive system and the environment. One question studied in games on graphs is which winning conditions admit ``simple'' winning strategies. The prevailing measure of the complexity of strategies in the literature is memory. In this note, we study two kinds of memory -- \emph{general} (a.k.a.~\emph{chaotic}) memory and \emph{chromatic} memory. The relationship between them was first addressed in the Ph.D.~thesis of Kopczy\'{n}ski~\cite{phdthesis}, followed by several recent works~\cite{bouyer_et_al:LIPIcs:2020:12836,casares:LIPIcs.CSL.2022.12,casares2022size}.

\medskip

We focus on deterministic games, infinite-duration and turn-based. We call our players Protagonist and Antagonist. They play over a finite\footnote{There are papers that study these games over infinite graphs, but in this note we only work with finite graphs.} directed graph called an \emph{arena}. Its set of nodes has to be partitioned into ones controlled by Protagonist and ones controlled by Antagonist. Players move a token over the nodes of the graph along its edges. In each turn, the token is moved by the player controlling the current node.

After infinitely many turns, this process produces an infinite path in our graph. A \emph{winning condition} is a set of infinite paths that are winning for Protagonist. In the literature, a standard way of defining winning conditions assumes that arenas are edge-colored by elements of some set of colors $C$. Then any subset $W\subseteq C^\omega$ is associated with a winning condition, consisting of all infinite paths whose sequence of colors belongs to $W$. 

In this paper, we seek simple winning strategies of Protagonist, while the complexity of Antagonist's strategies is mostly irrelevant to us. Such asymmetry is motivated by reactive synthesis, where Protagonist represents a system and Antagonist represents the environment. Now, 
the main measure of the complexity of Protagonist's strategies for us is memory. Qualitatively,  we distinguish between \emph{finite-memory} strategies and \emph{infinite-memory} strategies. In turn, among finite-memory strategies, we prefer those that have fewer states of memory.

Finite-memory strategies are defined through so-called \emph{memory structures}. Intuitively, a memory structure plays the role of a ``hard disk'' of a strategy. Formally,  a \emph{general} memory structure $\mathcal{M}$ is a deterministic finite automaton whose input alphabet is the set of edges of an arena. During the game, edges over which the token moves are fed to $\mathcal{M}$ one by one. Correspondingly, the state of $\mathcal{M}$ is updated after each move.
 Now, a strategy \emph{built on top of a memory structure} $\mathcal{M}$ (or simply an $\mathcal{M}$-strategy) is a strategy whose moves at any moment depend solely on two things: first, the current arena node, and second, the current state of $\mathcal{M}$. A strategy is finite-memory if it can be built on top of some memory structure. More precisely, if this memory structure has $k$ states, then strategies built on top of it are strategies \emph{with $k$ states of general memory}. Of course, some strategies cannot be built on top of any memory structure. Such strategies are infinite-memory strategies.

We also consider a special class of general memory structures called \emph{chromatic} memory structures. A memory structure is chromatic if its transition function does not distinguish edges of the same color. In other words, chromatic memory structures only reads colors of edges that are fed into them.
Alternatively, a chromatic memory structure can be viewed as a finite automaton whose input alphabet is not the set of edges, but the set of colors.
Correspondingly, strategies that are built on top of a chromatic memory structure with $k$ states are called strategies with \emph{$k$ states of chromatic memory}.

\subsection*{Around a Kopczy\'{n}ski's question}
 Complexity of strategies brings us to complexity of winning conditions. For a given winning condition, we want to determine the minimal amount of memory which is sufficient to win whenever it is possible to win. More specifically, the \textbf{general memory complexity} of a winning condition $W$, denoted by $\genmem(W)$,
 is the minimal $k\in\mathbb{N}$ such that in every arena there exists a Protagonist's strategy $S$ with $k$ states of general memory which is optimal w.r.t.~$W$. If no such $k$ exists, we set $\genmem(W) = +\infty$. Now, ``$S$ is optimal w.r.t.~$W$'' means that there exists no node $v$ such that some Protagonist's strategy is winning from $v$ w.r.t.~$W$ and $S$ is not.  Substituting ``general memory'' by ``chromatic memory'', we obtain a definition of the \textbf{chromatic memory complexity} of $W$, which is denoted by $\chrmem(W)$.

For any $W$, we have $\genmem(W)\le \chrmem(W)$. Our paper revolves around a question from the Ph.D.~thesis of Kopczy\'{n}ski~\cite{phdthesis}.
\begin{question}
\label{kop_conj} Is this true that $\genmem(W) = \chrmem(W)$ for every winning condition $W$?
\end{question}

To understand Kopczy\'{n}ski's motivation, we first have to go back to 1969, when B\"{u}chi and Landweber~\cite{buchi1969solving} established  that $\chrmem(W)$ is finite for all $\omega$-regular $W$. An obvious corollary of this is that $\genmem(W)$ is also finite for all $\omega$-regular $W$. Since then, there is an unfinished quest of \emph{exactly characterizing} $\chrmem(W)$ and $\genmem(W)$ for $\omega$-regular $W$. In particular, it is open whether $\chrmem(W)$ and $\genmem(W)$ are computable given an $\omega$-regular $W$ as an input (assuming $W$ is given, say, in a form of a non-deterministic  B{\"u}chi automaton recognizing $W$).

In his Ph.D.~thesis, Kopczy\'{n}ski contributed to this question by giving an algorithm computing $\chrmem(W)$ for prefix-independent $\omega$-regular $W$  (a winning condition is called prefix-independent if it is invariant under adding and removing finite prefixes). Prior to that, he  published a weaker version of this result in~\cite{kopczynski2007omega}.
 He asked Question \ref{kop_conj} to find out, whether his algorithm also computes $\genmem(W)$ for prefix-independent $\omega$-regular $W$. His other motivation was that the same chromatic memory structure can be used in different arenas. Indeed, transition functions of chromatic memory structures can be defined over colors so that we do not have to specify them individually for each arena.

Question \ref{kop_conj} was recently answered by Casares in~\cite{casares:LIPIcs.CSL.2022.12}. Namely, for every $n\in\mathbb{N}$ he gave a \emph{Muller} condition $W$ over $n$ colors with $\genmem(W) = 2$ and $\chrmem(W) = n$.

\begin{definition}
A winning condition $W\subseteq C^\omega$ is \textbf{Muller} if $C$ is finite and $\alpha\in W \iff\beta\in W$ for any two $\alpha, \beta\in C^\omega$ that have the same sets of colors occurring infinitely often in them.
\end{definition}
Every Muller condition is prefix-independent and $\omega$-regular. Hence, we now know that Kopczy\'{n}ski's algorithm does not always  compute $\genmem(W)$ for prefix-independent $\omega$-regular $W$. It is still open whether some other algorithm does this job.

In a follow-up work, Casares, Colcombet and Lehtinen~\cite{casares2022size} achieve a larger gap between $\genmem(W)$ and $\chrmem(W)$. Namely, they construct a Muller $W$ over $n$ colors such that $\genmem(W)$ is linear in $n$ and $\chrmem(W)$ is exponential in $n$.

It is worth mentioning that Casares, Colcombet and Lehtinen derive these examples from their new automata-theoretic characterizations of $\chrmem(W)$ and $\genmem(W)$ for Muller $W$. First, Casares~\cite{casares:LIPIcs.CSL.2022.12} showed that $\chrmem(W)$ equals the minimal size of a deterministic Rabin automaton, recognizing $W$, for every Muller $W$. Second, Casares, Colcombet and Lehtinen~\cite{casares2022size} showed that $\genmem(W)$ equals the minimal size of a good-for-games Rabin automaton, recognizing $W$, for every Muller $W$. The latter result complements an earlier work by Dziembowski, Jurdzi\'{n}ski and Walukiewicz~\cite{dziembowski1997much}, who characterized $\genmem(W)$ for Muller $W$ in terms of their Zielonka's trees~\cite{zielonka1998infinite}.

These examples, however, do not answer a natural follow-up question -- can the gap between $\genmem(W)$ and $\chrmem(W)$ be infinite? To answer it, we have to go beyond Muller and even $\omega$-regular conditions (because $\chrmem(W)$ is finite for them). 

\begin{question}
\label{my_conj}
Is it true that for every \textbf{finite} set of colors $C$ and for every winning condition $W\subseteq C^\omega$ we have $\genmem(W) < +\infty \implies \chrmem(W) < +\infty$?
\end{question}
\begin{remark}
 If we do not insist on finiteness of $C$, a negative answer to Question \ref{my_conj} follows from the example of Casares. Namely, for every $n$ he defines a winning condition $W_n\subseteq\{1, 2,\ldots n\}^\omega$, consisting of all $\alpha\in\{1, 2,\ldots n\}^\omega$ such that there are exactly two numbers from 1 to $n$ that occur infinitely often in $\alpha$. He then shows that $\genmem(W_n) = 2$ and $\chrmem(W_n) = n$ for every $n$. We can now consider the union of these winning conditions $\cup_{n\ge 2} W_n$, which is a winning condition over $C = \mathbb{N}$. On one hand, $\genmem(\cup_{n\ge 2} W_n) = 2$ because every arena has only finitely many natural numbers as colors, and hence $\cup_{n\ge 2} W_n$ coincides with $W_n$ for some $n$ there. On the other hand, we have $\chrmem(\cup_{n\ge 2} W_n) \ge\chrmem(W_n) = n$ for every $n$, which means that $\chrmem(\cup_{n\ge 2} W_n) = +\infty$.
\end{remark}

In this paper, we answer negatively to Question \ref{my_conj}.
\begin{theorem}
\label{uniform_separation}
There exists a finite set of colors $C$ and a winning condition $W\subseteq C^\omega$ such that $\genmem(W) = 2$ and $\chrmem(W) = +\infty$.
\end{theorem}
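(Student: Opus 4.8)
The plan is to exploit the single structural difference between the two memory models: a general memory structure may route two edges of the \emph{same} color to different states, whereas a chromatic one must treat them identically. I would therefore design $W$ over a small fixed alphabet $C$ so that, in the arenas that matter, ``telling apart two equally-colored edges'' is exactly the power needed to read off a hidden bit of advice, while ``seeing only colors'' throws Protagonist into an online prediction game. Concretely, I picture colors that mark the phases of a repeated prediction round --- a symbol that opens a round, symbols $0/1$ that Protagonist emits as a guess, and a symbol that reveals whether the guess was correct --- and I would let $W$ consist of the infinite plays in which Protagonist errs only finitely often (equivalently, the ``mistake'' color occurs only finitely often). This condition is prefix-independent and lives over a constant number of colors, regardless of how large the arenas grow.

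For the general-memory upper bound I would build a single $2$-state structure that wins optimally in every arena. The point is that with edges in hand Protagonist can, at the moment it must commit to a guess, see \emph{which} edge it is traversing, and in the relevant gadgets the correct label is encoded precisely in the identity of that edge (two edges of the same color leading to different continuations). Two states suffice to latch the one bit needed to emit the matching guess and then reset for the next round; a small Casares-style gadget forcing a mistake against any memoryless player is what pins $\genmem(W)$ at exactly $2$ rather than $1$. The real work here is to verify optimality \emph{uniformly}: in an arbitrary arena I must show that whenever some strategy wins from a node, the canonical $2$-state general strategy also wins from it. I would do this by a normalization argument establishing that the only obstruction to winning is being forced into infinitely many errors, and that such forced errors are always detectable from the local edge structure.

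The lower bound is where I would invoke online learning, and it is the main obstacle. For each $k$ I want an arena $A_k$ in which no $k$-state chromatic structure supports an optimal strategy. Inside $A_k$ I would encode a concept class whose Littlestone dimension $\ldim$ exceeds a suitable function of $k$, and let Antagonist play the role of the adversary walking down a mistake tree. The crux is the translation ``$k$ chromatic states $\Rightarrow$ a bounded-mistake online learner'': because a chromatic structure reads only the color stream and has only $k$ states, the strategy it supports behaves, along each round, like a deterministic learner whose internal configuration ranges over a set of size controlled by $k$; one then shows such a learner cannot make only finitely many mistakes against an adversary realizing a class of Littlestone dimension larger than that bound, while \emph{some} strategy --- indeed the $2$-state general one, which reads the revealed label off the edge --- still wins in $A_k$. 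Hence the chromatic structure is not optimal in $A_k$, giving $\chrmem(W) > k$ for every $k$, i.e. $\chrmem(W) = +\infty$.

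The delicate points I expect to fight with are: (1) keeping the alphabet genuinely finite while the arenas $A_k$ grow, which forces all the ``instance identity'' information into edge identities rather than colors --- exactly the data a chromatic structure is blind to; (2) formalizing the reduction from a $k$-state chromatic memory to an online learner with a mistake bound depending on $k$ only, so that a sufficiently large Littlestone dimension defeats it; and (3) making the two sides compatible, so that the very same $W$ that is trivial to win with $2$ general states is provably impossible to win optimally with few chromatic states. I would organize the argument so that the learner-side bookkeeping --- the \predict{} routine with its \createadv{} and \voteandupdate{} subroutines querying the consistency oracle \cons{} --- is isolated into a self-contained combinatorial lemma about mistake bounds, and then plugged into the game through the phase-structured colors described above.
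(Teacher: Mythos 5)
Your plan has a fatal structural flaw: the winning condition you propose --- ``the mistake color occurs only finitely often'' --- is a co-B\"uchi condition over the finite alphabet $C$, hence $\omega$-regular, hence $\chrmem(W)$ is finite by the B\"uchi--Landweber theorem cited in the paper. In fact co-B\"uchi objectives are special cases of parity objectives and are positionally determined, so $\chrmem(W)=1$ for your $W$: in any arena $A_k$, if no positional strategy wins from a node then no strategy at all wins from it, so the optimal positional strategy remains optimal. No amount of cleverness in designing the arenas, the concept classes, or the Littlestone-dimension reduction can overcome this, because memory complexity is a property of $W$ alone (quantified over all arenas), and you have chosen a $W$ that provably needs no memory. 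The paper is explicit that one must leave the $\omega$-regular world to get $\chrmem(W)=+\infty$. Relatedly, you claim your condition is prefix-independent; the paper poses the prefix-independent version as an open problem, which is another sign the proposal proves too much. To repair the idea you would have to make ``erring'' a genuinely non-regular function of the color history --- some unbounded quantity that the condition requires to stay bounded --- and that is exactly what the paper's Rope Ladder condition does: colors are monotone maps on the poset $\mathbb{N}\times\{0,1\}$, and $W$ asks that the composed image of $\zero$ stay below \emph{some} layer $N$, an infinite union of safety conditions in $\Sigma^0_2$.

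On the two bounds themselves: your learning-theoretic intuition for the lower bound points in the right direction (a small chromatic memory cannot distinguish two color-indistinguishable histories that demand different future behavior), but the paper's mechanism is far more elementary and is worth internalizing. It fixes two bit-sequences $\{I^0_n\},\{I^1_n\}$ such that for every $Q$ there are two $Q$-indistinguishable words $x,y\in\{0,1\}^t$ (a pigeonhole count over all $Q$-state DFAs) with $I^{x_0}_0\oplus\cdots\oplus I^{x_{t-1}}_{t-1}\neq I^{y_0}_0\oplus\cdots\oplus I^{y_{t-1}}_{t-1}$; a single branching gadget then forces any $Q$-state chromatic strategy to answer identically after the two branches although the ``ant'' sits at $(1,0)$ in one and $(1,1)$ in the other. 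No mistake trees or consistency oracles are needed, and the alphabet stays at five colors because the two branches reuse the same pair of colors $p^0,p^1$ in different orders. Your upper-bound sketch is also missing the genuinely hard step: for an arbitrary arena and an arbitrary winning strategy one must first normalize to a strategy whose reachable ant positions form a finite set (the paper does this with an irregular-plays/K\H{o}nig's lemma argument), and only then can two memory states suffice, by tracking which of the at most two maximal reachable positions dominates the current one. ``Latching one bit per round'' does not address why two states suffice \emph{uniformly} over all arenas for a condition that is not positional.
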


Topologically, our $W$ belongs to the $\Sigma_2^0$-level of the Borel hierarchy.
Next, the size of $C$ in our example is 5, and there is a chance that it can be reduced. In turn, $\genmem(W)$ is optimal because $\genmem(W) = 1$ implies $\chrmem(W) = 1$ (one state of general memory is equally useless as one state of chromatic memory).

 We call our $W$ the ``Rope Ladder'' condition. We define it in Section \ref{sec:def}. The upper bound on $\genmem(W)$ and the lower bound on $\chrmem(W)$ are given in Section \ref{sec:upp} and in Section \ref{sec:low}, respectively. Before that, we give Preliminaries in Section \ref{sec:prel}.

\subsection*{Further open questions}

Still, some intriguing variations of Question \ref{my_conj} remain open. For example, it is interesting to obtain Theorem \ref{uniform_separation} for a closed condition, i.e.~a condition in the $\Pi^0_1$-level of the Borel hierarchy, or equivalently, a condition given by a set of prohibited finite prefixes. In the game-theoretic literature, such conditions are usually called safety conditions. Our $W$ is an infinite union of safety conditions. In~\cite{colcombet2014playing}, Colcombet, Fijalkow and Horn give a characterization $\genmem(W)$ for safety $W$. Recently, Bouyer,  Fijalkow, Randour,and Vandenhove~\cite{regular} obtained a characterization of $\chrmem(W)$ for safety $W$.


\begin{problem}
\label{pref_ind}
Construct a finite set of colors $C$ and a safety winning condition $W\subseteq C^\omega$ such that $\genmem(W) < \infty$ and $\chrmem(W) = +\infty$.
\end{problem}

It is equally interesting to obtain Theorem \ref{uniform_separation}
 for a prefix-independent $W$, as our $W$  is not prefix-independent. 

\begin{problem}
\label{pref_ind}
Construct a finite set of colors $C$ and a prefix-independent winning condition $W\subseteq C^\omega$ such that $\genmem(W) < \infty$ and $\chrmem(W) = +\infty$.
\end{problem}

 There is also a variation of Question \ref{my_conj} related to a paper of Bouyer, Le Roux, Oualhadj, Randour, and Vandenhove~\cite{bouyer_et_al:LIPIcs:2020:12836}. In this paper, they introduce and study the class of arena-independent finite-memory determined winning conditions. When the set of colors $C$ is finite, this class can be defined as the class of $W$ such that both $\chrmem(W)$ and $\chrmem(C^\omega\setminus W)$ are finite\footnote{In their original definition, the ``memory structure'' (see Preliminaries) must be  the same in all arenas. When $C$ is finite, this definition is equivalent, because there are just finitely many chromatic memory structures up to a certain size.  If none of them works for all arenas, one can construct a finite arena where none of them works.} (meaning that both Protagonist and Antagonist can play optimally w.r.t.~$W$ using some constant number of states of chromatic memory).

First, Bouyer et al.~obtain an automata-theoretic characterization of arena-independent finite-memory determinacy. Second, they deduce a one-to-two-player lifting theorem from it. Namely, they show that as long as both $\chrmem(W)$ and $\chrmem(\lnot W)$ are finite in arenas without the Antagonist's nodes, the same is true for all arenas.

A natural step forward would be to study conditions $W$ for which 
 both $\genmem(W)$ and $\genmem(\lnot W)$ are finite. Unfortunately, it is even unknown whether this is a larger class of conditions. This raises the following problem.
\begin{problem}
\label{bouyer}
Construct a finite set of colors $C$ and a winning condition $W\subseteq C^\omega$ such that $\genmem(W)$ and $\genmem(\lnot W)$ are finite, but $\chrmem(W)$ is infinite.
\end{problem}
In fact, it is not clear if our $W$ from Theorem \ref{uniform_separation} solves this problem. We do not know whether $\genmem(\lnot W)$ is finite for this $W$.

 Question \ref{my_conj} is also open over infinite arenas. There is a  relevant result  due to Bouyer, Randour and Vandenhove~\cite{bouyer_et_al:LIPIcs.STACS.2022.16}, who showed that the class of $W$ for which $\chrmem(W)$ and $\chrmem(\lnot W)$ are both finite in infinite arenas coincides with the class of $\omega$-regular $W$. Thus, it would be sufficient to give a non-$\omega$-regular $W$ for which both $\genmem(W)$ and $\genmem(\lnot W)$ are finite in infinite arenas.

Finally, let us mention a line of work which studied the relationship between chromatic and general memory in the non-uniform setting. Namely, fix a single arena $\mathcal{A}$ and some winning condition $W$, and then consider two quantities: first, the minimal $k_{gen}$ such that $\mathcal{A}$ has an optimal strategy with $k_{gen}$ states of general memory, and second, the minimal $k_{chr}$ such that  $\mathcal{A}$ has an optimal strategy with $k_{chr}$ states of chromatic memory. In~\cite{le2020time}, Le Roux showed that if $k_{gen}$ is finite, then $k_{chr}$ is also finite. There is no contradiction with Theorem \ref{uniform_separation} because $k_{chr}$ depends not only on $k_{gen}$, but also on $\mathcal{A}$.
 A tight bound on $k_{chr}$ in terms of $k_{gen}$ and the number of nodes of $\mathcal{A}$ was obtained in~\cite{kozachinskiy2022state}.

\section{Preliminaries}
\label{sec:prel}
\textbf{Notation.} For a set $A$, we let $A^*$ and $A^\omega$ stand for the set of all finite and the set of all infinite sequences of elements of $A$, respectively. For $x\in A^*$, we let $|x|$ denote the length of $x$. We also set $|x| = +\infty$ for $x\in A^\omega$. We let $\circ$ denote the function composition. The set of positive integral numbers is denoted by $\mathbb{Z}^+$.
\subsection{Arenas}

\begin{definition} Let $C$ be a non-empty set. A tuple $\mathcal{A} = \langle V_P, V_A, E\rangle$ is called an \textbf{arena over the set of colors $C$} if the following conditions hold:
\begin{itemize}
\item $V_P, V_A, E$ are finite sets such that $V_P\cap V_A = \varnothing$, $V_P \cup V_A\neq\varnothing$ and $E\subseteq (V_P\cup V_A) \times C\times (V_P\cup V_A)$;
\item for every $s\in V_P\cup V_A$ there exist $c\in C$ and $t\in V_P\cup V_A$ such that $(s, c, t)\in E$.
\end{itemize}
\end{definition}

Elements of the set $V = V_P\cup V_A$ will be called nodes of $\mathcal{A}$. Elements of $V_P$ will be called nodes controlled by Protagonist (or simply Protagonist's nodes). Similarly, elements of $V_A$ will be called nodes controlled by Antagonist (or simply Antagonist's nodes). Elements of $E$ will be called edges of $\mathcal{A}$. For an edge $e = (s, c, t) \in E$ we define $\source(e) = s, \col(e) = c$ and $\target(e) = t$.
We imagine $e\in E$ as an arrow which is drawn from the node $\source(e)$ to the node $\target(e)$ and which is colored into $\col(e)$. Note that the second condition in the definition of an arena means that every node has at least one out-going edge.

 We extend the domain of $\col$ to the set $E^*\cup E^\omega$ by
\[\col(e_1 e_2 e_3\ldots) = \col(e_1)\col(e_2)\col(e_3)\ldots, \qquad e_1, e_2, e_3,\ldots\in E.\]

A non-empty sequence $p = e_1 e_2 e_3 \ldots \in E^*\cup E^\omega$ is called a path if for any $1\le i < |p|$ we have $\target(e_i) = \source(e_{i+1})$. We set $\source(p) = \source(e_1)$ and, if $p$ is finite, $\target(p) = \target(e_{|p|})$.
For technical convenience, every node $v\in V$ is assigned a $0$-length path $\lambda_v$, for which we set $\source(\lambda_v) = \target(\lambda_v) = v$ and $\col(\lambda_v) = \mbox{empty string}$. 

Paths are sequences of edges, so we will say that some paths are prefixes of the others. However, we have to define this for $0$-length paths. Namely, we say that $\lambda_v$ is a prefix of a path $p$ if and only if $\source(p) = v$.

\subsection{Strategies}

Let $\mathcal{A} = \langle V_P, V_A, E\rangle$ be an arena over the set of colors $C$. A Protagonist's strategy in $\mathcal{A}$ is any function 
\[S\colon\{p\mid p \mbox{ is a finite path in $\mathcal{A}$ with }\target(p) \in V_P\}\to E,\]
such that for every $p$ from the domain of $S$ we have $\source(S(p)) = \target(p)$. In this paper, we do not mention  Antagonist's strategies, but, of course, they can be defined similarly.

The set of finite paths in $\mathcal{A}$ is the set of positions of the game. Possible starting positions are $0$-length paths $\lambda_s, s\in V$. When the starting position\footnote{We do not have to redefine $S$ for every starting position. The same $S$ can be played from any of them.} is $\lambda_s$, we say that the game starts at $s$. Now, consider any finite path $p$. Protagonist is the one to move after $p$ if and only if $t =\target(p)$ is a Protagonist's node. In this situation, Protagonist must choose some edge starting at  $t$. A Protagonist's strategy fixes this choice for every $p$ with $\target(p)\in V_P$. We then append this edge to $p$ and get the next position in the game. Antagonist acts the same for those $p$ such that $\target(p)$ is an Antagonist's node.

Let us define paths that are consistent with a Protagonist's strategy $S$. First, any 0-length path $\lambda_v$ is consistent with $S$. Now, a non-empty  path $p = e_1 e_2 e_3\ldots$ (which may be finite or infinite) is consistent with  $S$ if the following holds:
\begin{itemize}
\item if $\source(p) \in V_P$, then $e_1 = S(\lambda_{\source(p)})$;
\item for every $1 \le i < |p|$, if $\target(e_i) \in V_P$, then $e_{i + 1} = S(e_1 e_2\ldots e_i)$.
\end{itemize}
For brevity, paths that are consistent with $S$ will also be called \emph{plays with} $S$. For a node $v$, we let $\fp(S, v)$ and $\ip(S, v)$ denote the set of finite plays with $S$ that start at $v$ and the set of infinite plays with $S$ that start at $v$, respectively. For $U\subseteq V$, we define $\fp(S, U) = \bigcup_{v\in U} \fp(S, v)$ and $\ip(S, U) = \bigcup_{v\in U} \ip(S, v)$.

\subsection{Memory structures}

Let $\mathcal{A} = \langle V_P, V_A,  E\rangle$ be an arena over the set of colors $C$.
A memory structure in $\mathcal{A}$ is a tuple $\mathcal{M} = \langle M, m_{init}, \delta\rangle$, where $M$ is a finite set, $m_{init}\in M$ and $\delta\colon M\times E\to M$. Elements of $M$ are called states of $\mathcal{M}$, $m_{init}$ is called the initial state of $\mathcal{M}$ and $\delta$ is called the transition function of $\mathcal{M}$. Given $m\in M$, we inductively define the function $\delta(m,\cdot)$ over arbitrary finite sequences of edges:
\begin{align*}
\delta(m, \mbox{empty sequence}) &= m,\\ 
\delta(m, se) &= \delta(\delta(m, s), e), \qquad s\in E^*, e\in E.
\end{align*}
In other words, $\delta(m, s)$ is the state of $\mathcal{M}$ it transits to from the state $m$ if  we fed $s$ to it.

A memory structure $\mathcal{M} = \langle M, m_{init}, \delta\rangle$ is called chromatic if $\delta(m, e_1) = \delta(m, e_2)$ for every $m\in M$ and for every $e_1, e_2\in E$ with $\col(e_1) = \col(e_2)$. In this case, there exists $\sigma\colon M\times C\to M$ such that $\delta(m, e) = \sigma(m,\col(e))$. In other words, we can view $\mathcal{M}$ as a deterministic finite automaton over $C$, with $\sigma$ being its transition function.

A strategy $S$ is built on top of a memory structure $\mathcal{M}$ if we have $S(p_1) = S(p_2)$ for any two paths $p_1, p_2$ with $\target(p_1) = \target(p_2)$ and $\delta(m_{init}, p_1) = \delta(m_{init}, p_2)$. In this case, we sometimes simply say that $S$ is an $\mathcal{M}$-strategy. To define an $\mathcal{M}$-strategy $S$, it is sufficient to give its \emph{next-move function} $n_S\colon V_P\times M\to E$. For $v\in V_P$ and $m\in M$, the value of $n_S(v, m)$ determines what $S$ does for paths that end at $v$ and bring $\mathcal{M}$ to $m$ from $m_{init}$.

A strategy $S$ built on top of a memory structure $\mathcal{M}$ with $k$ states is called a strategy with $k$ states of general memory. If $\mathcal{M}$ is chromatic, then $S$ is a strategy with $k$ states of chromatic memory.

For brevity, if $S$ is an $\mathcal{M}$-strategy and $p$ is a finite path, we say that $\delta(m_{init}, p)$ is the state of $S$ after $p$.

\subsection{Winning conditions and their memory complexity}
A winning condition is any set $W\subseteq C^\omega$. We say that a Protagonist's strategy $S$ is winning from a node $u$ w.r.t.~$W$ if the image of $\ip(S, u)$ under $\col$ is a subset of $W$.
In other words, any infinite play from $u$ against $S$ must give a sequence of colors belonging to $W$. Now, a Protagonist's strategy $S$ is called optimal w.r.t.~$W$ if there exists no node $u$ such that some Protagonist's strategy is winning from $u$ w.r.t.~$W$ and $S$ is not.

We let $\genmem(W)$ be the minimal $k\in\mathbb{Z}^+$ such that in every arena $\mathcal{A}$ over $C$ there exists a Protagonist's strategy with $k$ states of general memory which is optimal w.r.t.~$W$. If no such $k$ exists, we set $\genmem(W) = +\infty$.
Likewise, we let  $\chrmem(W)$ be the minimal $k\in\mathbb{Z}^+$ such that in every arena $\mathcal{A}$ over $C$ there exists a Protagonist's strategy with $k$ states of general memory which is optimal w.r.t.~$W$. Again, if no such $k$ exists, we set $\chrmem(W) = +\infty$.

\section{The ``Rope Ladder'' Condition}
\label{sec:def}
Consider the partially ordered set $\Omega = (\mathbb{N}\times\{0, 1\}, \preceq)$, where $\preceq$ is defined by
\[
\label{order}
\forall (n, a), (m, b)\in\mathbb{N}\times\{0, 1\}\qquad (n, a)\preceq (m, b) \iff (n, a) = (m, b) \mbox{ or } n < m.
\]

\begin{center}
\begin{tikzpicture}
\node[draw=none,  circle] (00) {$(0, 0)$};
\node[draw=none,  circle, right = 0.7cm of 00] (01) {$(0, 1)$};
\node[draw=none,  circle, above=0.7cm of 00] (10) {$(1, 0)$};
\node[draw=none,  circle, right = 0.7cm of 10] (11) {$(1, 1)$};
\node[draw=none,  circle, above=0.7cm of 10] (20) {$(2, 0)$};
\node[draw=none,  circle, right = 0.7cm of 20] (21) {$(2, 1)$};
\node[draw=none, above = 0.1cm of 20] (dots) {$\vdots$};
\node[draw=none, above = 0.1cm of 21] (dots) {$\vdots$};

\draw[->] (10) -- (00);
\draw[->] (10) -- (01);
\draw[->] (11) -- (00);
\draw[->] (11) -- (01);
\draw[->] (20) -- (10);
\draw[->] (20) -- (11);
\draw[->] (21) -- (10);
\draw[->] (21) -- (11);
\end{tikzpicture}
\end{center}
Above is its Hasse diagram, with arrows representing $\preceq$ (they are directed from bigger elements to smaller elements):

We will use an abbreviation $\zero = (0, 0)$. Next, 
we let $\mathbb{M}$ be the set of all functions $f\colon \Omega\to\Omega$ that are monotone w.r.t.~$\preceq$. Being monotone w.r.t.~$\preceq$ means that $x\preceq y\implies f(x)\preceq f(y)$ for all $x, y\in\Omega$.

\begin{definition}
\label{rope_ladder}
\textbf{The Rope Ladder} condition is a set $\RL\subseteq \mathbb{M}^\omega$, consisting of all infinite sequences $(f_1, f_2, f_3, \ldots)\in \mathbb{M}^\omega$ for which there exists $(N, b)\in\Omega$ such that  $f_{n}\circ\ldots\circ f_2\circ f_1(\zero)\preceq (N, b)$ for all $n\ge 1$.
\end{definition}
We will use the following informal terminology with regard to $\RL$. Imagine that there is an ant which can move over the elements of $\Omega$. Initially, it sits at $\zero$. Next, take any sequence $(f_1, f_2, f_3, \ldots)\in \mathbb{M}^\omega$. We start moving the ant by applying functions from the sequence to the position of the ant. Namely, we first move the ant from $\zero$ to $f_1(\zero)$, then from $f_1(\zero)$ to $f_2\circ f_1(\zero)$, and so on. Now, $(f_1, f_2, f_3, \ldots)\in \RL$ if and only if there exists a ``layer'' in $\Omega$ which is never exceeded by the ant.

\begin{remark}
$\RL$ is defined over infinitely many colors, but for our lower bound on its chromatic memory complexity we will consider its restriction to some finite subset of $\mathbb{M}$.
\end{remark}

To illustrate these definitions, we establish the following fact. It can also be considered as a warm-up for our lower bound.
\begin{fact}
\label{not_positional}
$\chrmem(\RL) > 1$.
\end{fact}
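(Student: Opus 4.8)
The plan is to exhibit a single small arena in which a winning strategy exists but no strategy with one state of chromatic memory is winning from the relevant node; since being optimal requires winning wherever winning is possible, this forces $\chrmem(\RL) \ge 2$. The first observation is that a strategy with one state of chromatic memory is exactly a \emph{positional} (memoryless) strategy: the next-move function $n_S\colon V_P\times M\to E$ has $|M| = 1$, so it depends only on the current node. In an arena consisting of a single Protagonist node with several self-loops, such a strategy therefore plays one fixed color forever.

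The heart of the argument is to design two monotone functions $f_0, f_1\in\mathbb{M}$ for which iterating either one alone drives the ant to infinity, while alternating them keeps it bounded. I would use the two copies in each layer to record ``which function was applied last'', setting
\[
f_0(n, 0) = (n + 1, 0),\quad f_0(n, 1) = (n, 0),\qquad f_1(n, 0) = (n, 1),\quad f_1(n, 1) = (n + 1, 1).
\]
Intuitively, $f_0$ is ``cheap'' (keeps the layer) only when the previous function was $f_1$, i.e.~when the current copy is $1$, and symmetrically for $f_1$; repeating the same function forces a layer increment. Then $f_0^\omega(\zero)$ runs through $(0,0), (1,0), (2,0),\ldots$ and $f_1^\omega(\zero)$ through $(0,0), (0,1), (1,1), (2,1),\ldots$, both unbounded, whereas the alternating sequence $f_1, f_0, f_1, f_0,\ldots$ produces the partial compositions $(0,1), (0,0), (0,1),\ldots$, which are all bounded by $(1, 0)$ and hence lie in $\RL$.

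With these functions in hand I would take the arena $\mathcal{A}$ with a single Protagonist node $v$ (and $V_A = \varnothing$) carrying the two self-loops $(v, f_0, v)$ and $(v, f_1, v)$. The alternating strategy witnesses that Protagonist wins from $v$, so any optimal strategy must also win from $v$. But a positional strategy must pick one of the two loops and play it forever, producing $f_0^\omega$ or $f_1^\omega$, whose sequence of colors is unbounded and thus outside $\RL$; hence no positional strategy is winning from $v$, so none is optimal. This yields $\chrmem(\RL) > 1$, and incidentally only the finite subset $\{f_0, f_1\}\subseteq\mathbb{M}$ of colors is used, as anticipated by the preceding remark.

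The only real work is the verification step, which I expect to be the main (though routine) obstacle: checking that $f_0$ and $f_1$ are genuinely monotone with respect to $\preceq$. Since $\preceq$ compares two distinct elements only when their layers differ, monotonicity reduces to showing that whenever $\mathrm{layer}(x) < \mathrm{layer}(y)$ one has $f_i(x)\preceq f_i(y)$; the borderline case $\mathrm{layer}(y) = \mathrm{layer}(x) + 1$ is the one to handle with care, and there the two images coincide exactly, so $\preceq$ holds. Everything else, namely the two unboundedness computations and the single boundedness computation, is a direct iteration, and monotonicity of $f_1$ follows from that of $f_0$ by the copy-swapping symmetry of $\Omega$.
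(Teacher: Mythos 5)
Your proof is correct, but it takes a genuinely different route from the paper's. The paper keeps a (different) pair of functions $f_0, f_1$ that each have a layer-$1$ fixed point ($(1,0)$ and $(1,1)$ respectively) and increment the layer elsewhere, and it uses a two-node arena in which \emph{Antagonist} first moves the ant to $(1,0)$ or $(1,1)$; a positional Protagonist then fails because he cannot react to Antagonist's choice, whereas two states suffice to remember which fixed point to sit on. You instead eliminate the Antagonist entirely: your $f_0, f_1$ are designed so that iterating either one alone is unbounded while alternating them is bounded, and the failure of a one-state strategy in your single-node one-player arena is its inability to alternate rather than its inability to react. Your version is arguably more economical (one node, no opponent, and it shows the lower bound already holds in one-player arenas), while the paper's version doubles as a warm-up for Section 5, which reuses exactly its $f_0, f_1$ and the square-node gadget in the main lower-bound construction. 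Your monotonicity check goes through: for $(n,a)\preceq(m,b)$ with $n<m$, the only sub-case where the layers of the images are not strictly ordered is $m=n+1$ with $a=0,b=1$ (for $f_0$), and there the images are both $(n+1,0)$, so $\preceq$ still holds; $f_1$ follows by the automorphism $(n,a)\mapsto(n,1-a)$ of $\Omega$. One small wording caveat: your claim that in the borderline case ``the two images coincide exactly'' is only true in that worst sub-case; in the other sub-cases with $m=n+1$ the images differ but their layers are strictly ordered, which is equally fine.
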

\begin{proof}
First, consider $u, v\colon\Omega\to\Omega$, depicted below:
\begin{center}
\begin{tikzpicture}
\node[draw=none,  circle] (f00) {$(0, 0)$};
\node[draw=none,  circle, right = 0.7cm of f00] (f01) {$(0, 1)$};
\node[draw=none,  circle, above=0.7cm of f00] (f10) {$(1, 0)$};
\node[draw=none,  circle, right = 0.7cm of f10] (f11) {$(1, 1)$};
\node[draw=none,  circle, above=0.7cm of f10] (f20) {$(2, 0)$};
\node[draw=none,  circle, right = 0.7cm of f20] (f21) {$(2, 1)$};
\node[draw=none, above = 0.05cm of f20] (fdots1) {$\vdots$};
\node[draw=none, above = 0.05cm of f21] (fdots2) {$\vdots$};
\node[draw=none, right = 0.6cm of fdots1] (f) {$u$};

\draw[->,red,thick] (f00) -- (f10);
\draw[->,red,thick] (f01) -- (f11);
\draw[->,red,thick] (f10) -- (f20);
\draw[->,red,thick] (f11) -- (f21);

\node[draw=none,  circle, right=5cm of f00] (g00) {$(0, 0)$};
\node[draw=none,  circle, right = 0.7cm of g00] (g01) {$(0, 1)$};
\node[draw=none,  circle, above=0.7cm of g00] (g10) {$(1, 0)$};
\node[draw=none,  circle, right = 0.7cm of g10] (g11) {$(1, 1)$};
\node[draw=none,  circle, above=0.7cm of g10] (g20) {$(2, 0)$};
\node[draw=none,  circle, right = 0.7cm of g20] (g21) {$(2, 1)$};
\node[draw=none, above = 0.05cm of g20] (gdots1) {$\vdots$};
\node[draw=none, above = 0.05cm of g21] (gdots2) {$\vdots$};
\node[draw=none, right = 0.6cm of gdots1] (g) {$v$};

\draw[->,red,thick] (g00) -- (g11);
\draw[->,red,thick] (g01) -- (g10);
\draw[->,red,thick] (g10) -- (g21);
\draw[->,red,thick] (g11) -- (g20);

\end{tikzpicture}
\end{center}
These functions are defined by arrows that direct each element of $\Omega$ to the value of the function on this element. Formally, $u((n, a)) = (n + 1, a)$ and $v((n, a)) = (n + 1, 1 - a)$ for every $(n, a)\in\Omega$. It holds that $u, v\in\mathbb{M}$ because they both always increase the first coordinate by 1.

We also consider the following two functions $f_0, f_1\colon\Omega\to\Omega$: 
\begin{equation}
    \label{fg}
    f_b((n, a)) = \begin{cases} (n, a) & (n, a) = (0, 0), (0, 1) \mbox{ or } (1, b),\\(n + 1, a) & \mbox{otherwise},
    \end{cases} \qquad b\in\{0, 1\}
\end{equation}

For the reader's convenience, we depict them as well.
\begin{center}
\begin{tikzpicture}
\node[draw=none,  circle] (u00) {$(0, 0)$};
\node[draw=none,  circle, right = 0.7cm of u00] (u01) {$(0, 1)$};
\node[draw=none,  circle, above=0.7cm of u00] (u10) {$(1, 0)$};
\node[draw=none,  circle, right = 0.7cm of u10] (u11) {$(1, 1)$};
\node[draw=none,  circle, above=0.7cm of u10] (u20) {$(2, 0)$};
\node[draw=none,  circle, right = 0.7cm of u20] (u21) {$(2, 1)$};
\node[draw=none, above = 0.1cm of u20] (udots1) {$\vdots$};
\node[draw=none, above = 0.1cm of u21] (udots2) {$\vdots$};
\node[draw=none, right = 0.6cm of udots1] (u) {$f_0$};

\path
    (u00) edge [loop above, red,thick]  (u00);
\path
    (u01) edge [loop above, red,thick]  (u01);
\path
    (u10) edge [loop above, red,thick]  (u10);
\draw[->,red,thick] (u11) -- (u21);

\node[draw=none,  circle, right = 5cm of u00] (v00) {$(0, 0)$};
\node[draw=none,  circle, right = 0.7cm of v00] (v01) {$(0, 1)$};
\node[draw=none,  circle, above=0.7cm of v00] (v10) {$(1, 0)$};
\node[draw=none,  circle, right = 0.7cm of v10] (v11) {$(1, 1)$};
\node[draw=none,  circle, above=0.7cm of v10] (v20) {$(2, 0)$};
\node[draw=none,  circle, right = 0.7cm of v20] (v21) {$(2, 1)$};
\node[draw=none, above = 0.05cm of v20] (vdots1) {$\vdots$};
\node[draw=none, above = 0.05cm of v21] (vdots2) {$\vdots$};
\node[draw=none, right = 0.6cm of vdots1] (v) {$f_1$};

\path
    (v00) edge [loop above, red,thick]  (v00);
\path
    (v01) edge [loop above, red,thick]  (v01);
\path
    (v11) edge [loop above, red,thick]  (v11);
\draw[->,red,thick] (v10) -- (v20);
\end{tikzpicture}
\end{center}
To see that  $f_0, f_1\in\mathbb{M}$, observe that both functions have 3 fixed points, and at remaining  points, they act by increasing the first coordinate by 1. There could be a problem with monotonicity if below some fixed point there were a point which is not a fixed point. However, the sets of fixed points of $f_0$ and $f_1$ are downwards-closed w.r.t.~$\preceq$.


Consider the following arena.
\begin{center}
\begin{tikzpicture}

\node[draw, circle,minimum size=1cm] (1) {};

\node[draw,  regular polygon, regular polygon sides=4, minimum size=1cm,right=2cm of 1] (2) {};

\path[->]
 (1) edge [thick, in=120,out=60,out distance=1cm,in distance=1cm] node[midway, above] {$u$} (2);

\path[->]
 (1) edge [thick, in=-120,out=-60,out distance=1cm,in distance=1cm] node[midway, below] {$v$} (2);

\path[->]
 (2) edge [thick, in=30,out=90,out distance=1.5cm,in distance=1.5cm] node[midway, above] {$f_0$} (2);

\path[->]
 (2) edge [thick, in=-30,out=-90,out distance=1.5cm,in distance=1.5cm] node[midway, above] {$f_1$} (2);
\end{tikzpicture}
\end{center}
The circle is controlled by Antagonist and the square is controlled by Protagonist. Assume that the game starts in the circle. We first show that Protagonist has a winning strategy w.r.t.~$\RL$. Then we show that Protagonist does not have a positional strategy which is winning w.r.t.~$\RL$. This implies that $\chrmem(\RL) > 1$.

Let us start with the first claim. After the first move of Antagonist, the ant moves either to $u(\zero) = (1, 0)$ or to  $v(\zero) = (1,1)$. In the first case, Protagonist wins by forever using the $f_0$-edge (the ant will always stay at $(1, 0)$). In the second case, Protagonist wins by always using the $f_1$-edge (the ant will always stay at $(1, 1)$).

Now we show that every positional strategy of Protagonist is not winning w.r.t.~$\RL$. In fact, there are just 2 Protagonist's positional strategies -- one which always uses the $f_0$-edge and the other which always uses the $f_1$-edge. The first one loses if Antagonist goes by the $v$-edge. Then the ant moves to $v(\zero) = (1, 1)$. If we start applying $f_0$ to the ant's position, the first coordinate of the ant will get arbitrarily large. Similarly, the second Protagonist's positional strategy loses if Antagonist goes by the $u$-edge.\qed
\end{proof}

\section{Upper Bound on the General Memory}
\label{sec:upp}
In this section, we establish 
\begin{proposition}
$\genmem(\RL) = 2$.
\end{proposition}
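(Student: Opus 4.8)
The plan is to prove $\genmem(\RL)\ge 2$ and $\genmem(\RL)\le 2$ separately. For the lower bound I would lean on Fact~\ref{not_positional}: a memory structure with a single state is automatically chromatic, since its transition function can only map into the unique state, so strategies with one state of general memory coincide with positional strategies and with strategies with one state of chromatic memory. Hence $\genmem(W)=1 \iff \chrmem(W)=1$ for every $W$, and as $\chrmem(\RL)>1$ we obtain $\genmem(\RL)\ge 2$. The real work is the upper bound: in an arbitrary finite arena $\mathcal{A}$, which uses only finitely many functions from $\mathbb{M}$ as colors, Protagonist must have a strategy with two states of general memory that wins from every node from which he can win at all.

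Write $\pi(n,a)=n$ for the \emph{layer} of an ant position. Unfolding Definition~\ref{rope_ladder}, a play lies in $\RL$ exactly when the layers visited by the ant stay bounded. The first structural fact I would record is that monotonicity of the colors makes winning \emph{downward closed} in the ant position: if $\omega'\preceq\omega$ then, for every sequence of functions, the trajectory started at $\omega'$ stays $\preceq$ the one started at $\omega$, so boundedness of the latter forces boundedness of the former. Working in the configuration graph with vertices $(v,\omega)$, the winning region is therefore of the form $\bigcup_v \{v\}\times D_v$ with each $D_v\subseteq\Omega$ downward closed; concretely $D_v$ contains all positions of layer strictly below some threshold together with a subset of the two rails at that threshold layer.

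The first main step is a \textbf{uniform safety reduction}. I would show there is a single bound $B=B(\mathcal{A})$ such that from every winning start configuration $(v,\zero)$ Protagonist can guarantee the layer never exceeds $B$, i.e.\ he wins the safety objective $\mathrm{Safe}_B=\{\text{plays with }\pi\le B\text{ throughout}\}$. Since $\RL=\bigcup_B \mathrm{Safe}_B$ one inclusion is immediate; the content is that a winning $\RL$-strategy yields a uniform bound over the finitely many start nodes. Here I would fix a positional (in the configuration graph) strategy that keeps the play inside the winning region and argue that the layers reachable from the start configurations are bounded: otherwise, as $V$ is finite, some node would be reached by two consistent histories ending in configurations $(v,\omega)\preceq(v,\omega')$ of strictly increasing layer, and a monotonicity-based pumping argument would let Antagonist iterate the segment responsible for the increase, driving the layer to infinity while the play remains winning — contradicting that it stays in $\RL$. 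This reduces the problem, on winning nodes, to a single safety game $\mathrm{Win}_B$ whose relevant configuration space (the ant capped at layer $B$) is finite; I would treat the careful pumping/fixpoint justification as a technical lemma rather than the conceptual heart.

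The second and harder step is to realize the positional safety strategy for $\mathrm{Win}_B$ with only \emph{two} states of general memory, instead of the $\Theta(B)$ states obtained by naively storing the capped ant position. The observation I would exploit is the threshold structure of $\mathrm{Win}_B$: at a Protagonist configuration $(v,(n,a))$ an outgoing edge $e$ keeps the play in $\mathrm{Win}_B$ iff $\pi(f_e(n,a))$ stays below the target threshold, and since $f_e(\cdot,a)$ is monotone this is a downward-closed condition on $n$ with a cutoff that depends only on $v$ and on the current rail $a\in\{0,1\}$, not on $n$. Consequently the greedy rule ``play the safe edge with the largest cutoff for the current rail'' depends on the ant position only through its rail, a single bit, which I would take as the memory set $M=\{0,1\}$. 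The \emph{main obstacle} is exactly this compression: one must verify that this one bit can be maintained by an edge-reading automaton — the transition function must update the stored rail correctly along every play consistent with the strategy — and that the greedy choice never lets the layer escape $\mathrm{Win}_B$. This is precisely where general memory is indispensable, since both the move and the bit-update read the concrete edge, hence the current node, which a chromatic structure seeing only colors cannot do; the contrast with the failure of positional play in Fact~\ref{not_positional} is the qualitative reason two states are both necessary and sufficient.
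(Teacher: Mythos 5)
Your overall architecture (lower bound from Fact~\ref{not_positional} plus the observation that one state of general memory is automatically chromatic; upper bound by first making the ant's layer uniformly bounded and then compressing to two memory states) matches the paper's, and the lower bound is fine. But both halves of the upper bound have genuine gaps. First, in the uniform boundedness step, a positional strategy in the configuration graph that merely ``keeps the play inside the winning region'' need not win: $\RL$ is a countable union of safety conditions, not a safety condition, so Protagonist can remain in the winning region forever while the layer drifts to infinity (take a Protagonist node with one self-loop colored by the identity and one colored by the layer-incrementing $u$ of Fact~\ref{not_positional}: every configuration there is winning, yet always taking the $u$-loop stays in the winning region and loses). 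Your pumping argument does not repair this: a strategy that is positional in the configuration graph may respond differently at $(v,\omega)$ and $(v,\omega')$ when $\omega\prec\omega'$, so Antagonist cannot force the layer-increasing segment to be replayed. The paper instead starts from an arbitrary winning strategy $S_1$ and builds $S_1'$ which, whenever the current node and the \emph{exact} ant position repeat along a history, resets its simulated history to the shorter occurrence; K\H{o}nig's lemma then bounds the number of ``irregular'' histories and hence the set of reachable ant positions. The exactness of the repetition (equality of ant positions, not mere $\preceq$-comparability) is what makes the reset sound.

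Second, and more fatally, the two-state compression you propose --- storing the rail (second coordinate) of the current ant position --- cannot be maintained by any edge-reading automaton: the rail of $\col(e)\big((n,a)\big)$ depends on $n$ and not only on $a$ (the function $h$ of Section~\ref{sec:low} sends $(1,1)$ and $(2,1)$, which share a rail, to $(0,0)$ and $(1,1)$, which do not), and without the exact rail your greedy rule cannot even be evaluated. You flag this yourself as ``the main obstacle'' but leave it unresolved, and resolving it is the actual content of the proposition. The paper's two states do not track the true rail at all: for each node $v$ it fixes the at most two maximal elements $M^v_0, M^v_1$ of the finite set $\Omega_v$ of ant positions reachable at $v$ under $S_1'$, the state $I$ encodes the invariant $\ant(p)\preceq M^{\target(p)}_I$, and the transition from $I$ to $J$ on edge $e$ is computed from the fixed representative $M^u_I$ and $e$ alone (choosing $J$ with $\col(e)\big(M^u_I\big)\preceq M^v_J$), never from the true ant position; monotonicity of the colors then propagates the invariant, and the next-move function replays $S_1'$ after a fixed witness path $p^v_I$ realizing $M^v_I$. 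You would need this, or an equivalent idea, to complete the proof.
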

By Fact \ref{not_positional}, we only have to show an upper bound $\genmem(\RL) \le 2$. For that, for every arena $\mathcal{A}$ over $\mathbb{M}$ and for every Protagonist's strategy $S_1$ in $\mathcal{A}$ we construct a Protagonist's strategy $S_2$ with 2 states of general memory for which the following holds: 
for every node $v$ of $\mathcal{A}$, if $S_1$ is winning w.r.t.~$\RL$ from $v$, then so is $S_2$.

We will use the following notation. Take any finite path $p = e_1 \ldots e_m$ in $\mathcal{A}$. Define $\ant(p) = \col(e_m) \circ \ldots\circ\col(e_2)\circ\col(e_1)(\zero)$.
In other words, $\ant(p)$ is the position of the ant after the path $p$. In case when $p$ is a $0$-length path, we set $\ant(p) = \zero$. We also write $\layer(p)$ for the first coordinate of $\ant(p)$.

Let $U$ be the set of nodes of $\mathcal{A}$ from which $S_1$ is winning w.r.t.~$\RL$. By definition of $\RL$, for every $P\in\ip(S_1, U)$ there exists $N\in\mathbb{N}$ such that $\layer(p) \le N$ for every finite prefix $p$ of $P$.
The first step of our argument is to change the quantifiers here. That is, we obtain a strategy $S_1^\prime$ for which there exists some $N\in\mathbb{N}$ such that $\layer(p) \le N$ for every $p\in \fp(S_1^\prime, U)$.

We use an argument, similar to one which was used in~\cite{chatterjee2010generalized} to show finite-memory determinacy of multi-dimensional energy games. We call a play $p\in\fp(S_1, U)$ \emph{regular} if there exist two prefixes $q_1$ and $q_2$ of $p$ such that, first, $q_1$ is shorter than $q_2$, second, $\target(q_1) = \target(q_2)$, and third, $\ant(q_1) = \ant(q_2)$. In other words, $q_1$ and $q_2$ must lead to the same node in $\mathcal{A}$ and to the same position of the ant in $\Omega$. We stress that $q_2$ might coincide with $p$, but $q_1$ must be a proper prefix of $p$. If $p\in \fp(S_1, U)$ is not regular, then we call it \emph{irregular}.

First, we show that there are only finitely many irregular plays in the set $\fp(S_1, U)$. Note that any prefix of an irregular play is irregular. Thus, irregular plays form a collection of trees with finite branching (for each $u\in U$ there is a tree of irregular plays that start at $u$). Assume for contradiction that there are infinitely many irregular plays. Then, by K\H{o}nig's lemma, there exists an infinite branch in one of our trees. It gives some $P\in\ip(S_1, U)$ whose finite prefixes are all irregular. However, $P$ must be winning for Protagonist w.r.t.~$\RL$. In other words, there exists $N\in\mathbb{N}$ such that $\layer(p) \le N$ for every finite prefix $p$ of $P$. So, if $p$ ranges over finite prefixes of $P$, then $\ant(p)$ takes only finitely many values. Hence, there exist a node $v$ of $\mathcal{A}$ and some $(n, b)\in \Omega$ such that $v = \target(p)$ and $(n, b) = \ant(p)$ for infinitely many prefixes $p$ of $P$. Consider any two such prefixes. A longer one is regular because the shorter one is its prefix and leads to the same node in $\mathcal{A}$ and to the same position of the ant. This is a contradiction.

We now define $S_1^\prime$. It will maintain the following invariant for plays that start at $U$: if $p_{cur}$ is the current play, then there exists an irregular $p\in\fp(S_1, U)$ such that $\target(p_{cur}) = \target(p)$ and $\ant(p_{cur}) = \ant(p)$. Since there are only finitely many irregular plays, this invariant implies that $\ant(p_{cur})$ takes only finitely many values over $p_{cur}\in \fp(S_1^\prime, U)$, as required from $S_1^\prime$.

To maintain the invariant, $S_1^\prime$ plays as follows.
In the beginning, $p_{cur} = \lambda_w$ for some $w\in U$. Hence, we can set $p = \lambda_w$ also. Indeed, $\lambda_w\in \fp(S_1, U)$ and it is irregular as it has no proper prefixes. Let us now show how to maintain the invariant. Consider any play $p_{cur}$ with $S_1^\prime$ for which there exists an irregular $p\in\fp(S_1, U)$ such that $\target(p_{cur}) = \target(p)$ and $\ant(p_{cur}) = \ant(p)$. In this position, if its Protagonist's turn to move, $S_1^\prime$ makes the same move as $S_1$ from $p$. As a result, some edge $e$ is played. Observe that $pe\in\fp(S_1, U)$. In turn, our new current play with $S_1^\prime$ is $p_{cur} e$. We have that $\target(p_{cur}e) = \target(pe) = \target(e)$ and $\ant(p_{cur}e) = \col(e)\big(\ant(p_{cur})\big) = \col(e)\big(\ant(p)\big) = \ant(pe)$. So, 
if $pe$ is irregular, then the invariant is maintained. Now, assume that $pe$ is regular. Then there are two prefixes $q_1$ and $q_2$ of $pe$ such that, first, $q_1$ is shorter than $q_2$, second, $\target(q_1) = \target(q_2)$, and third, $\ant(q_1) = \ant(q_2)$. Since $p$ is irregular, $q_2$ cannot be a prefix of $p$. Hence, $q_2 = pe$. By the same reason, $q_1$ is irregular. Thus, invariant is maintained if we set  the new value of $p$ be $q_1$. Indeed, $\target(p_{cur}e) = \target(pe) = \target(q_2) = \target(q_1)$ and $\ant(p_{cur}e) = \ant(pe) =\ant(q_2) = \ant(q_1)$.

We now turn $S_1^\prime$ into a strategy $S_2$ with 2 states of general memory which is winning w.r.t.~$\RL$ from every node of $U$.

\textbf{Preliminary definitions.} Let $X$ be the set of nodes reachable from $U$ by plays with $S_1^\prime$.
 Next, for $v\in X$, define $\Omega_v\subseteq\Omega$ as the set of all  $(n, b)\in\Omega$ such that $(n, b) = \ant(p)$ for some $p\in\fp(S_1^\prime, W)$ with $v = \target(p)$. In other words, $\Omega_v$ is the set of all possible positions of the ant that can arise at $v$ if we play according to  $S_1^\prime$ from a node of $U$.

 Now, take any $v\in X$. The set $\Omega_v$ is non-empty and, by our requirements on $S_1^\prime$, finite. Hence, it has $1$ or $2$ maximal elements w.r.t.~$\preceq$. We will denote them by $M_0^v$ and $M_1^v$. If $\Omega_v$ has just a single maximal element, then $M_0^v = M_1^v$. If $\Omega_v$ has two different maxima, then let $M_0^v$ be the one having $0$ as the second coordinate. Finally, for every $v\in X$ and for every $b\in\{0, 1\}$ fix some $p_b^v\in\fp(S_1^\prime, U)$ such that $\target(p_b^v) = v$ and $\ant(p_b^v) = M_b^v$.

\textbf{Description of $S_2$.}
Two states of $S_2$ will  be denoted by $0$ and $1$. The initial state of $S_2$ is $0$.  The next-move function of $S_2$ is defined as follows. Assume that the state of $S_2$ is $I\in\{0,1\}$ and it has to make a move from a node $v$. If $v\notin X$, it makes an arbitrary move (this case does not matter for the argument below). Now, assume that $v\in X$. Then $S_2$ make the same move as $S_1^\prime$ after $p_I^v$.

We now describe the memory structure of $S_2$. Assume that it receives an edge $e$ when its state is $I\in\{0,1\}$. The new state $J\in\{0,1\}$ is computed as follows. Denote $u = \source(e)$ and $v = \target(e)$. If $u\notin X$ or $v\notin X$, then $J = 0$ (again, this case is irrelevant for the rest of the argument). Assume now that $u, v\in X$. If $\col(e)\big(M_I^u\big) \in \Omega_v$, then we find some  $b\in\{0, 1\}$ such that $\col(e)\big(M_{I}^u\big) \preceq M_b^v$ and set $J = b$.
Otherwise, we set $J = 0$.

\textbf{Showing that $S_2$ is winning from $W$.} First, we observe that $\target(p) \in X$ for every  $p\in\fp(S_2, U)$ (in other words, $S_2$ cannot leave $X$ if we start somewhere in $U$). Indeed, assume for contradiction that some play with $S_2$ leaves $X$ from some node $v\in X$. Let $I$ be the state of $S_2$ at the moment just before leaving $X$. If it is a Protagonist's turn to move, then it moves as $S_1^\prime$ after $p_I^v$. Recall that $p_I^v\in \fp(S_1^\prime, U)$. Thus, we obtain a continuation of $p_I^v$ which is consistent with $S_1^\prime$ and leads outside $X$. This contradicts the definition of $X$. Now, if it is an Antagonist's turn to move from $v$, then any continuation of $p_I^v$ by one edge is consistent with $S_1^\prime$, so we obtain the same contradiction.

Next, we show that for any play $p\in\fp(S_2, U)$ we have $\ant(p) \preceq M_I^{\target(p)}$, where $I$ is the state of $S_2$ after $p$. This statement implies that $S_2$ is winning w.r.t.~$\RL$ from every node of $U$. Note that $M_I^{\target(p)}$ is well-defined thanks to the previous paragraph.

We prove this statement by induction on the length of $p$. Let us start with the induction base. Assume that $|p| = 0$ (then $p = \lambda_w$ for some $w\in U$). The state of $S_2$ after $p$ is the initial state, that is, $0$. Thus, we have to show that $\ant(p) \preceq M_0^{\target(p)}$. Note that $p$ has length $0$ and hence is consistent with any strategy. In particular, $p\in \fp(S_1^\prime, U)$. Hence, $\ant(p) \in\Omega_{\target(p)}$. If $\Omega_{\target(p)}$ has just a single maximum, then $\ant(p)$ does not exceed this maximum, as required. Now, if $M_0^{\target(p)} \neq M_1^{\target(p)}$, then the second coordinate of $M_0^{\target(p)}$ is $0$, so we have  $\ant(p) \preceq M_0^{\target(p)}$ just because $\ant(p) = \zero$.

Next, we establish the induction step. Consider any $p\in\fp(S_2, U)$ of positive length and assume that for all paths from $\fp(S_2, U)$ of smaller length the statement is already proved. We prove our statement for $p$. Let $e$ be the last edge of $p$. Correspondingly, let $q$ be the part of $p$ preceding $e$. Denote $u = \target(q) = \source(e)$ and $v = \target(p) = \target(e)$

Any prefix of $p$ is also in $\fp(S_2, U)$, so $q\in \fp(S_2, U)$. Therefore, our statement holds for $q$. Namely, if $I$ is the state of $S_2$ after $q$, then  $\ant(q)  \preceq M_I^u$.

Let $J$ be the state of $S_2$ after $p$. Our goal is to show that $\ant(p) \preceq M_J^v$. Note that $\ant(p) = \col(e)\big(\ant(q)\big)$ by definition of $\ant$. Since $\col(e) \in\mathbb{M}$ is monotone and $\ant(q) \preceq M_I^u$, we have that $\ant(p) = \col(e)\big(\ant(q)\big) \preceq \col(e)\big(M_I^u\big)$. It remains to show that $\col(e)\big(M_I^u\big) \preceq M_J^v$. Note that $J$ is the state into which $S_2$ transits from the state $I$ after receiving $e$. By definition of the memory structure of $S_2$, it is sufficient to show that $\col(e)\big(M_I^u\big) \in \Omega_v$.

 By definition of $p_I^u$, we have that $M_I^u = \ant(p_I^u)$. Hence, $\col(e)\big(M_I^u\big) = \ant(p_I^u e)$. The path $p_I^u e$ starts at some node of $U$ and ends in $\target(e) = v$. Thus, to establish $\ant(p_I^u e)\in \Omega_v$, it remains to show consistency of $p_I^ue$ with $S_1^\prime$. We have $p_I^u\in \fp(S_1^\prime, U)$ by definition of $p_I^u$. In turn, if Protagonist is the one to move from $u = \target(p_I^u)$, then $e = S_1^\prime(p_I^u)$. Indeed, $e$ is the edge played by $S_2$ from $u$ when its state is $I$. Hence, $e = S_1^\prime(p_I^u)$, by the definition of the next-move function of $S_2$.

%
%
%

\begin{thebibliography}{8}
\bibitem{bloem2018graph}
{\sc Bloem, R., Chatterjee, K., and Jobstmann, B.}
\newblock {Graph games and reactive synthesis}.
\newblock In {\em Handbook of Model Checking}. Springer, 2018, pp.~921--962.


\bibitem{bouyer_et_al:LIPIcs.STACS.2022.16}
{\sc Bouyer, P., Randour, M., and Vandenhove, P.}
\newblock {Characterizing Omega-Regularity Through Finite-Memory Determinacy of
  Games on Infinite Graphs}.
\newblock {\em TheoretiCS}, 2 (2023), 1--48.

\bibitem{bouyer_et_al:LIPIcs:2020:12836}
{\sc Bouyer, P., Le~Roux, S., Oualhadj, Y., Randour, M., and Vandenhove, P.}
\newblock {Games Where You Can Play Optimally with Arena-Independent Finite
  Memory}.
\newblock {\em Logical Methods in Computer Science}, 18 (2022), 11:1-11:44.

\bibitem{buchi1969solving}
{\sc B{\"u}chi, J.~R., and Landweber, L.~H.}
\newblock {Solving sequential conditions by finite-state strategies}.
\newblock {\em Transactions of the American Mathematical Society 138\/} (1969),
  295--311.

\bibitem{regular}
{\sc Bouyer, P., Fijalkow, N., Randour, M., and Vandenhove, P.}
\newblock {How to Play Optimally for Regular Objectives?}
\newblock In {\em 50th International Colloquium on Automata, Languages, and Programming (ICALP 2023)\/}, vol.~261
  of {\em Leibniz International Proceedings in Informatics (LIPIcs)}, Schloss
  Dagstuhl -- Leibniz-Zentrum f{\"u}r Informatik, pp.~118:1--118:18.

\bibitem{casares:LIPIcs.CSL.2022.12}
{\sc Casares, A.}
\newblock {On the Minimisation of Transition-Based Rabin Automata and the
  Chromatic Memory Requirements of Muller Conditions}.
\newblock In {\em 30th EACSL Annual Conference on Computer Science Logic (CSL
  2022)\/} (Dagstuhl, Germany, 2022), F.~Manea and A.~Simpson, Eds., vol.~216
  of {\em Leibniz International Proceedings in Informatics (LIPIcs)}, Schloss
  Dagstuhl -- Leibniz-Zentrum f{\"u}r Informatik, pp.~12:1--12:17.

\bibitem{casares2022size}
{\sc Casares, A., Colcombet, T., and Lehtinen, K.}
\newblock {On the size of good-for-games Rabin automata and its link with the
  memory in Muller games}.
\newblock In {\em 349th International Colloquium on Automata, Languages, and Programming (ICALP 2022)\/}, vol.~229
  of {\em Leibniz International Proceedings in Informatics (LIPIcs)}, Schloss
  Dagstuhl -- Leibniz-Zentrum f{\"u}r Informatik, pp.~117:1--117:20.

\bibitem{chatterjee2010generalized}
{\sc Chatterjee, K., Doyen, L., Henzinger, T.~A., and Raskin, J.-F.}
\newblock {Generalized mean-payoff and energy games}.
\newblock In {\em IARCS Annual Conference on Foundations of Software Technology
  and Theoretical Computer Science (FSTTCS 2010)\/} (2010), Schloss
  Dagstuhl-Leibniz-Zentrum fuer Informatik.

\bibitem{colcombet2014playing}
{\sc Colcombet, T., Fijalkow, N., and Horn, F.}
\newblock {Playing safe}.
\newblock In {\em 34th International Conference on Foundation of Software
  Technology and Theoretical Computer Science (FSTTCS 2014)\/} (2014), Schloss
  Dagstuhl-Leibniz-Zentrum fuer Informatik.

\bibitem{dziembowski1997much}
{\sc Dziembowski, S., Jurdzinski, M., and Walukiewicz, I.}
\newblock {How much memory is needed to win infinite games?}
\newblock In {\em Proceedings of Twelfth Annual IEEE Symposium on Logic in
  Computer Science\/} (1997), IEEE, pp.~99--110.

\bibitem{kopczynski2007omega}
{\sc Kopczy{\'n}ski, E.}
\newblock {Omega-regular half-positional winning conditions}.
\newblock In {\em International Workshop on Computer Science Logic\/} (2007),
  Springer, pp.~41--53.

\bibitem{phdthesis}
{\sc Kopczy\'{n}ski, E.}
\newblock {\em Half-positional determinacy of infinite games}.
\newblock PhD thesis, Warsaw University, 2008.

\bibitem{kozachinskiy2022state}
{\sc Kozachinskiy, A.}
\newblock {State complexity of chromatic memory in infinite-duration games}.
\newblock {\em arXiv preprint arXiv:2201.09297\/} (2022).

\bibitem{le2020time}
{\sc Le~Roux, S.}
\newblock {Time-aware uniformization of winning strategies}.
\newblock In {\em Conference on Computability in Europe\/} (2020), Springer,
  pp.~193--204.

\bibitem{zielonka1998infinite}
{\sc Zielonka, W.}
\newblock {Infinite games on finitely coloured graphs with applications to
  automata on infinite trees}.
\newblock {\em Theoretical Computer Science 200}, 1-2 (1998), 135--183.
\end{thebibliography}
%

\section{Lower Bound on the Chromatic Memory}
\label{sec:low}
In this section, we establish the following proposition.
\begin{proposition}
There exists a finite set $C\subseteq \mathbb{M}$ such that $\chrmem(\RL\cap C^\omega) = +\infty$.
\end{proposition}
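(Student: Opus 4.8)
The plan is to fix a finite color set $C\subseteq\mathbb{M}$ — extending the four functions $u,v,f_0,f_1$ of Fact~\ref{not_positional} by one further monotone function, to reach the five colors promised in the introduction — and, for each $k\in\mathbb{Z}^+$, to build an arena $\mathcal{A}_k$ over $C$ witnessing that no chromatic memory structure with $k$ states yields an optimal strategy. Since $k$ is arbitrary, this gives $\chrmem(\RL\cap C^\omega)=+\infty$. The conceptual point I want to exploit is exactly the asymmetry isolated by the upper bound: a general memory structure reads edges, so it knows both endpoints of every move and can run the node-aware update ``$J=b$ with $\col(e)\big(M_I^u\big)\preceq M_b^v$'', which is why a single bit suffices; a chromatic memory structure reads only colors, so its state after a play is a function of the color-history alone, and it is the \emph{same} function at every node. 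I therefore want $\mathcal{A}_k$ in which the correct stabilizing decision at each of several ``test'' nodes depends on information carried by the edge/node-history but not recoverable from the color-history by any small automaton.

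Concretely, I would have the Antagonist drive the ant up a ladder of rungs while repeatedly deciding, through the branching (edge) structure rather than through colors, which rung the Protagonist will eventually be asked to stabilize on. The climbing is done with $u$ and $v$, which keep, respectively flip, the rail and always raise $\layer$, while stabilization uses $f_0,f_1$, which pin the ant to a chosen rail of a low rung and push every higher position up. The arena is laid out so that reaching the $i$-th test node produces a color-history that does not determine $i$, whereas the node itself does. A $2$-state general memory still wins because, exactly as in Section~\ref{sec:upp}, the invariant $\ant(p)\preceq M_I^{\target(p)}$ is maintained by a transition that consults the node-specific maxima $M_0^{\,\cdot},M_1^{\,\cdot}$; I will either exhibit this strategy directly on $\mathcal{A}_k$ or simply invoke $\genmem(\RL)=2$ together with a proof that Protagonist can win from the designated start, so that ``optimal'' forces ``winning.''

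For the lower bound I would fix an arbitrary chromatic memory structure $\mathcal{M}=\langle M,m_{init},\delta\rangle$ with $|M|\le k$ and assume, for contradiction, that some $\mathcal{M}$-strategy $S$ is optimal on $\mathcal{A}_k$. I track the state $\delta(m_{init},\cdot)$ along the family of $N=k+1$ partial plays that lead, under optimal Antagonist play, to the test nodes. Since this state depends only on $\col(\cdot)$, and the test nodes are reached by color-histories that do not encode $i$, the pigeonhole principle produces two indices $i\neq j$ whose partial plays reach the same memory state. Feeding $\mathcal{M}$ identical continuation colors then keeps the two runs in lock-step memory-wise, so $S$ is forced to play identically at the two corresponding Protagonist nodes; but optimality requires opposite stabilizing choices there — one continuation keeps the ant on a bounded rung, while the other applies the pinning function for the wrong rail and sends $\layer(p)$ to infinity. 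Hence $S$ fails to win from a node from which winning is possible, contradicting optimality; as $k$ is arbitrary, the proposition follows.

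The main obstacle is the construction of $\mathcal{A}_k$ itself: I must arrange that the ``which rung to stabilize'' information is genuinely invisible to colors yet decidable from edges, \emph{while staying inside a fixed finite $C$}. The naive encoding of a distinct color per rung is illegal because $C$ is fixed, so all informative content must be pushed into the arena's branching, and I must simultaneously certify that this same branching still admits the $2$-state winning strategy — that is, that the node-aware update really does collapse the many reachable ant-configurations into the two classes $M_0^{v},M_1^{v}$ at every node. Verifying monotonicity of the fifth function and confirming that $u,v,f_0,f_1$ together realize both the climbing and the pinning behaviour on $\mathcal{A}_k$ is routine; the delicate part is certifying the pigeonhole step, namely that two memory-collapsed test nodes truly demand incompatible moves.
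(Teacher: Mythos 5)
There is a genuine gap, and it sits exactly where you locate the ``delicate part.'' Your plan is to make the decisive information ``genuinely invisible to colors yet decidable from edges,'' pushing it into the branching structure of $\mathcal{A}_k$. But the winning condition $\RL$ and the ant's position $\ant(p)$ are functions of $\col(p)$ alone: two plays with the same color history put the ant in the same place and therefore admit exactly the same winning continuations (as sequences of colors). So no separation can come from branching while colors agree. What is actually needed is a pair of plays with \emph{different} color histories — hence different ant positions demanding incompatible pinning moves — that no $Q$-state automaton reading colors can tell apart. This is where your choice of colors fails: if the climbing is done with $u$ and $v$, the rail of the ant after a word $w\in\{u,v\}^n$ is simply the parity of the number of $v$'s in $w$, which a $2$-state chromatic memory structure computes exactly. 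Adding one further monotone function (presumably an $h$-like descent, which you do need to bring the ant back to layer $1$ where $f_0,f_1$ can pin) does not repair this. The paper's construction instead replaces $u,v$ by incremental functions $p^0,p^1$ whose effect on the rail at layer $n$ is governed by bit sequences $\{I^0_n\},\{I^1_n\}$ engineered (Lemma \ref{sequences}) so that the resulting rail, as a function of the color word, is provably not computable by any automaton with at most $Q$ states: there exist two $Q$-indistinguishable words over $\{p^0,p^1\}$ whose XOR-sums, i.e.\ final rails, differ. This is the core idea your proposal is missing, and without it no finite extension of $\{u,v,f_0,f_1\}$ of the kind you describe can work.

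A second, smaller issue is the pigeonhole step itself. Collapsing two of $k+1$ test plays to the same memory state does not by itself yield a contradiction, because there are only two possible ``required moves'' ($f_0$ or $f_1$) and the colliding pair may well require the same one; you cannot control which pair collides. The paper avoids this by producing directly a \emph{single} pair of indistinguishable color words with opposite required moves, rather than relying on a counting collision among many test nodes. So the statement you defer as ``certifying that two memory-collapsed test nodes truly demand incompatible moves'' is not a verification detail — it is the theorem-shaped content of the lower bound, and it forces the introduction of colors like $p^0,p^1$ whose rail-effect is layer-dependent.
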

We start by describing $C$. First, we put there $f_0, f_1$ that are defined in \eqref{fg}. Next, put there a function $h\colon\Omega\to\Omega$, defined by
\[h((n, a)) = \begin{cases}(n - 1, a) & n > 1 \\ (0, 0) & n = 0, 1.\end{cases}\]

For the reader's convenience, it is depicted below:
\begin{center}
\begin{tikzpicture}

\node[draw=none,  circle] (h00) {$(0, 0)$};
\node[draw=none,  circle, right = 0.7cm of h00] (h01) {$(0, 1)$};
\node[draw=none,  circle, above=0.7cm of h00] (h10) {$(1, 0)$};
\node[draw=none,  circle, right = 0.7cm of h10] (h11) {$(1, 1)$};
\node[draw=none,  circle, above=0.7cm of h10] (h20) {$(2, 0)$};
\node[draw=none,  circle, right = 0.7cm of h20] (h21) {$(2, 1)$};
\node[draw=none, above = 0.05cm of h20] (hdots1) {$\vdots$};
\node[draw=none, above = 0.05cm of h21] (hdots2) {$\vdots$};
\node[draw=none, right = 0.6cm of hdots1] (h) {$h$};

\path
    (h00) edge [loop left, red,thick]  (h00);
    
    \draw[->,red,thick] (h01) -- (h00);

\draw[->,red,thick] (h10) -- (h00);
\draw[->,red,thick] (h11) -- (h00);
\draw[->,red,thick] (h20) -- (h10);
\draw[->,red,thick] (h21) -- (h11);
\end{tikzpicture}
\end{center}

Let us establish that $h\in\mathbb{M}$. Take any $(n, a), (m, b) \in\Omega$ such that $(n, a)\preceq (m, b)$. We show that $h((n, a)) \preceq h((m, b))$. If $(n, a) = (m, b)$, then $h((n, a)) = h((m, b))$. Now, if $(n, a)\neq (m, b)$, then $n < m$. The first coordinates of $h((n, a))$ and $h((m, b))$ are $\max\{0, n - 1\}$ and $\max\{0, m - 1\}$, respectively. If $m > 1$, then $\max\{0, m - 1\} = m - 1 > \max\{0, n - 1\}$, which implies that $h((n, a)) \preceq h((m, b))$. Now, if $m \le 1$, then $n\le 1$ also, which means that $h((n, a)) = h((m, b)) = (0, 0)$.

We will also put into $C$ two more functions $p^0, p^1\colon\Omega\to\Omega$, but to define them, we need some auxiliary work. 
\begin{definition}
A function $f\colon\Omega\to\Omega$ is called \textbf{incremental} if for every $(n, b)\in\Omega$ there exists $c\in\{0,1\}$ such that  $f((n, b)) = (n + 1, c)$.
\end{definition}
Let us first observe that every incremental $f\colon\Omega\to\Omega$ belongs to $\mathbb{M}$. Indeed, take any $(n, a), (m, b)\in \Omega$ such that $(n, a)\preceq (m, b)$. We have to show that $f((n, a))\preceq f((m, b))$. for every incremental $f$. If $(n, a) = (m, b)$, we have $f((n, a)) = f((m, b))$ . Otherwise, $n < m$. Then $f((n, a)) = (n + 1, c)$ and $f((m, b)) = (m + 1, d)$ for some $c, d\in\{0,1\}$. Since $n +1 < m +1$, we have $f((n, a))\preceq f((m, b))$.

We say that two binary words $x, y\in\{0, 1\}^*$ are \emph{$Q$-indistinguishable}  if there exists no deterministic finite automaton over $\{0, 1\}$  with at most $Q$ states which comes to different states on $x$ and on $y$.

\begin{lemma} 
\label{sequences}
There exist two infinite sequences of bits  $\{I_n^0\}_{n = 0}^\infty\in\{0,1\}^\omega $ and $\{I_n^1\}_{n = 0}^\infty\in\{0,1\}^\omega$ such that for every $Q \in\mathbb{N}$ there exist $t\in\mathbb{N}$ and two $Q$-indistinguishable binary words $x = x_0\ldots x_{t - 1}$ and $y = y_0\ldots y_{t - 1}$ of length $t$ such that:
\[I_0^{x_0} \oplus \ldots \oplus I_{t-1}^{x_{t-1}} \neq I_0^{y_0} \oplus \ldots \oplus I_{t-1}^{y_{t-1}}\]
($\oplus$ denotes XOR). 
\end{lemma}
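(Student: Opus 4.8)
The plan is to reduce the statement to a clean combinatorial question about parities over $\mathbb{F}_2$, solve that by a diagonalization over $Q=1,2,3,\ldots$, and base the diagonalization on a single counting fact about $Q$-indistinguishability.

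First I would observe that for two words $x=x_0\ldots x_{t-1}$ and $y=y_0\ldots y_{t-1}$ of the \emph{same} length only the XOR-difference sequence matters. Writing $d_n = I_n^0 \oplus I_n^1$ and $\Phi(w) = I_0^{w_0}\oplus\cdots\oplus I_{|w|-1}^{w_{|w|-1}}$, a direct computation gives
\[
\Phi(x)\oplus\Phi(y) \;=\; \bigoplus_{n\,:\,x_n\neq y_n} d_n \;=\; \langle d, x\oplus y\rangle,
\]
the $\mathbb{F}_2$-inner product of $d$ (restricted to the first $t$ coordinates) with $x\oplus y$. Thus the individual values $I_n^0$ are irrelevant for equal-length words: it suffices to construct a single sequence $d\in\{0,1\}^\omega$ and then take any $I^0,I^1$ with $I_n^0\oplus I_n^1 = d_n$ (e.g.\ $I_n^0=0$, $I_n^1=d_n$). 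The lemma then becomes: construct $d$ such that for every $Q$ there exist a length $t$ and two $Q$-indistinguishable words $x,y$ of length $t$ with $\langle d, x\oplus y\rangle = 1$.

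The crux, and the step I expect to be the main obstacle, is the following uniform bound: the number of $Q$-indistinguishability classes among words of a fixed length $t$ is at most some $F(Q)$ that does \emph{not} depend on $t$. To prove it, note that, after relabeling the initial state to a fixed one, there are only finitely many DFAs with at most $Q$ states over $\{0,1\}$, say $N(Q)$ of them. Their product automaton $\mathcal{P}_Q$ has at most $F(Q):=Q^{N(Q)}$ states, and two words are $Q$-indistinguishable precisely when $\mathcal{P}_Q$ reaches the same state on both. Since $\mathcal{P}_Q$ can be in at most $F(Q)$ states after reading any word, the length-$t$ words split into at most $F(Q)$ classes, and this bound is independent of $t$. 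This independence is exactly what makes the argument work: I can let $t$ grow while keeping the number of classes bounded, forcing collisions.

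With this in hand I would build $d$ in stages. Suppose $d$ has been fixed on positions $\{0,\ldots,\ell-1\}$ by the previous stages. For the current $Q$, choose $t>\ell$ large enough that $2^{\,t-\ell}>F(Q)$, and consider the $2^{\,t-\ell}$ length-$t$ words beginning with the fixed prefix $0^{\ell}$. By the bound they fall into at most $F(Q)$ classes, so by pigeonhole two distinct such words $x,y$ are $Q$-indistinguishable; they agree on the first $\ell$ positions and hence differ at some position $j\in\{\ell,\ldots,t-1\}$. I then set $d_j=1$ and $d_n=0$ for every other $n\in\{\ell,\ldots,t-1\}$; since $x,y$ agree on $\{0,\ldots,\ell-1\}$ and only $d_j$ is nonzero on the new range, $\langle d, x\oplus y\rangle = d_j\,(x_j\oplus y_j)=1$, so $x,y$ witness the requirement for this $Q$. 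Finally I put $\ell\leftarrow t$ and continue. Because each stage fixes only coordinates strictly beyond those used earlier, and each earlier witness has an inner product depending solely on already-committed coordinates, no earlier requirement is ever disturbed, and the resulting $d$ satisfies all of them at once.
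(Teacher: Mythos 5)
Your proof is correct and follows essentially the same route as the paper: a stage-by-stage diagonalization over $Q$, a pigeonhole argument over the (length-independent) $Q^{A_Q}$-bounded tuple of states of all DFAs with at most $Q$ states to find two indistinguishable words agreeing on the already-committed prefix, and planting a single $1$ in the XOR-difference of $I^0,I^1$ at one position where they differ. Your explicit reformulation via the difference sequence $d$ and the $\mathbb{F}_2$ inner product $\langle d, x\oplus y\rangle$ is only a cleaner packaging of what the paper does implicitly when it sets $I^0_n=I^1_n=0$ off the chosen index $m$ and forces $I^0_m\neq I^1_m$.
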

\begin{proof}
Let $A_Q$ be the number of deterministic finite automata over $\{0,1\}$ with at most $Q$ states. For $Q\in\mathbb{N}$, let $l_Q$ be any number such that $2^{l_Q} > Q^{A_Q}$.

We split natural numbers in consecutive blocks $B_1, B_2, B_3, \ldots$, where
\[B_Q = \{l_1 + \ldots + l_{Q - 1}, \ldots, l_1 + \ldots + l_{Q - 1} + l_Q - 1\}\]
(so that $|B_Q| = l_Q$). We first define $I^0_n, I^1_n$ for $n\in B_1$, then for $n\in B_2$, and so on. The requirement of the lemma for $Q$ will be guaranteed after we define our sequences in the first $Q$ blocks.

More specifically, assume that our sequences are already defined in the first $Q - 1$ blocks. We have to define them in $B_Q$ in some way that satisfies the requirement of the lemma for $Q$. Since $2^{l_Q} >  Q^{A_Q}$, there exist two different $Q$-indistinguishable binary words  $a, b\in\{0,1\}^{l_Q}$. Indeed, to every binary word $w$ we can assign a tuple, where for all deterministic finite automaton $\mathcal{A}$ with at most $Q$ states we have a coordinate, indicating the state of $\mathcal{A}$ after reading $w$. The number of such tuples is bounded by  $Q^{A_Q}$. Hence, in $\{0,1\}^{l_Q}$ there are two different binary words $a, b$ with the same tuple assigned to them. This means that $a$ and $b$ are $Q$-indistinguishable.

Let $t = l_1 +\ldots + l_Q$ and
\[x = \underbrace{00\ldots0}_{l_1+\ldots +l_{Q - 1}} a\in\{0,1\}^t, \qquad y = \underbrace{00\ldots0}_{l_1+\ldots +l_{Q - 1}} b\in\{0,1\}^t.\]
Note that $x$ and $y$ are obtained from $a$ and $b$ by attaching the same prefix. Hence, $x$ and $y$ are also $Q$-indistinguishable. We claim that we can define $I^0_n, I^1_n$ for $n\in B_Q$ in such a way that:
\begin{equation}
\label{ineq}
I_0^{x_0} \oplus \ldots \oplus I_{t-1}^{x_{t-1}} \neq I_0^{y_0} \oplus \ldots \oplus I_{t-1}^{y_{t-1}}.
\end{equation}
Since $a$ and $b$ are different, we have that $x$ and $y$ are also different. But both $x$ and $y$ start with 
$l_1+\ldots +l_{Q - 1}$ zeros. Hence, all the indices where they differ belong to $B_Q$. Take any $m\in B_Q$ such that $x_m\neq y_m$. Define $I^0_n = I^1_n = 0$ for all $n\in B_Q\setminus\{m\}$. It remains to define $I^0_m$ and $I^1_m$ in such a way that \eqref{ineq} holds. Note that all the summands except $I^{x_m}_m$ and $I^{y_m}_m$ are already defined. Since $x_m\neq y_m$, one of these summands is  $I^0_m$ and the other is $I^1_m$. One of them is in the left-hand side, and the other one is in the right-hand side. Hence, we can define them in such a way that the inequality is true.\qed
\end{proof}
We now take sequences $\{I_n^0\}_{n = 0}^\infty\in\{0,1\}^\omega $ and $\{I_n^1\}_{n = 0}^\infty\in\{0,1\}^\omega$, satisfying Lemma \ref{sequences}, and define $p^0, p^1\colon\Omega\to\Omega$ as follows:
\[
p^0((n, b)) = \Big(n + 1,  b\oplus I_n^0\Big),\qquad p^1((n, b)) = \Big(n + 1,  b\oplus I_n^1\Big).
\]
Note that $p^0, p^1$ are incremental. Hence, $p^0, p^1\in\mathbb{M}$. We set $C = \{f_0, f_1, h, p^0, q^1\}$ and show that $\chrmem(\RL\cap C^\omega) =+\infty$. For that, for every $Q\in\mathbb{N}$, we show that $\chrmem(\RL\cap C^\omega)  > Q$. Fix any $Q\in\mathbb{N}$ and let $t\in\mathbb{N}$ and $x = x_0\ldots x_{t - 1}\in\{0,1\}^t,y = y_0\ldots y_{t - 1}\in\{0, 1\}^t$ be such that $x$ and $y$ are $Q$-indistinguishable and 
\begin{equation}
\label{I}
I_0^{x_0} \oplus \ldots \oplus I_{t-1}^{x_{t-1}} \neq I_0^{y_0} \oplus \ldots \oplus I_{t-1}^{y_{t-1}}
\end{equation}
(existence of such $t$, $x$ and $y$ is guaranteed by Lemma \ref{sequences}). Consider the following arena:
\begin{center}
\begin{tikzpicture}

\node[draw, circle,minimum size=0.5cm] (1) {$u$};

\node[draw, circle, above right=0.5cm and 1cm of 1] (2) {};

\node[draw, circle, below right=0.5cm and 1cm of 1] (3) {};

\node[draw, circle, right=1cm of 2] (4) {};
\node[draw, circle, right=1cm of 3] (5) {};

\node[draw=none, right=0.2cm of 4] (6) {$\ldots$};
\node[draw=none, right=0.2cm of 5] (7) {$\ldots$};

\node[draw, circle, right=0.2cm of 6] (8) {};
\node[draw, circle, right=0.2cm of 7] (9) {};

\node[draw, circle, below right=0.5cm and 1cm of 8,minimum size=0.5cm] (10) {$v$};

\node[draw, circle, right= 1cm of 10] (11) {};
\node[draw=none, right=0.2cm of 11] (12) {$\ldots$};
\node[draw, circle, right= 0.2cm of 12] (13) {};
\node[draw,  regular polygon, regular polygon sides=4, minimum size=0.5cm,right=1cm of 13] (14) {$w$}; 
\path[->]
 (1) edge [thick] node[midway, above] {$p^{x_0}$} (2);

\path[->]
 (1) edge [thick] node[midway, below] {$p^{y_0}$} (3);

\path[->]
 (2) edge [thick] node[midway, above] {$p^{x_1}$} (4);

\path[->]
 (3) edge [thick] node[midway, below] {$p^{y_0}$} (5);

\path[->]
 (8) edge [thick] node[midway, above] {$p^{x_{t-1}}$} (10);
\path[->]
 (9) edge [thick] node[midway, below] {$p^{y_{t-1}}$} (10);

\path[->]
 (10) edge [thick] node[midway, above] {$h$} (11);

\path[->]
 (13) edge [thick] node[midway, above] {$h$} (14);

\path[->]
 (14) edge [thick, in=30,out=90,out distance=1.5cm,in distance=1.5cm] node[midway, above] {$f_0$} (14);

\path[->]
 (14) edge [thick, in=-30,out=-90,out distance=1.5cm,in distance=1.5cm] node[midway, above] {$f_1$} (14);

\draw [decorate,
	decoration = {brace,mirror,amplitude=15pt}, thick] (5.5,-0.4) --  (10,-0.4);

\node[draw=none, below=1cm of 12] (16) {$t - 1$ edges};
\end{tikzpicture}
\end{center}
All circles are controlled by Antagonist and the square is controlled by Protagonist. The game starts at the node $u$. We claim that Protagonist has a winning strategy w.r.t.~$\RL$. Indeed, in the beginning, Antagonist has two choices -- to go through $p^{x_0}, p^{x_1},\ldots, p^{x_{t-1}}$ or to go through $p^{y_0}, p^{y_1},\ldots, p^{y_{t-1}}$. In any case, upon reaching $v$, the first coordinate of the ant will be $t$ (both $p^0$ and $p^1$ always increase the first coordinate of the ant by 1). Then we go through $t - 1$ edges colored by $h$. As a result, the position of the ant at $w$ will be either $(1, 0)$ or $(1, 1)$. If it is $(1, 0)$, then Protagonist wins by always using $f_0$. If it is $(1, 1)$, then Protagonist wins by always using $f_1$.

It remains to show that Protagonist has no winning strategy with at most $Q$ states of chromatic memory. Indeed, consider any Protagonist's strategy $S$ with at most $Q$ states of chromatic memory. Our goal is to show that $S$ is not winning. It is built on top of some chromatic memory structure with at most $Q$ states. This memory structure, by definition, only reads colors of edges. Hence, when we go from $u$ to $v$, we either feed $p^{x_0}p^{x_1}\ldots p^{x_{t-1}}\in\{p^0, p^1\}^t$ or $p^{y_0}p^{y_1}\ldots p^{y_{t-1}}\in\{p^0, p^1\}^t$ to it. We claim that $S$ comes into the same state on these two sequences. Indeed, up to renaming letters of the alphabet, we may assume that we feed $x = x^0\ldots x^{t-1}$ and $y = y^0\ldots y^{t-1}$ to the memory structure of $S$. By definition, $x$ and $y$ are $Q$-indistinguishable. Since the memory structure of $S$ has at most $Q$ states, it must come into the same state on $x$ and $y$. We conclude the state of $S$ at $v$, and hence at $w$, will be the same in both possible plays. Thus, $S$ acts identically at $w$ in these two plays.

At the same time, there are two different possible positions of the ant at $w$. More specifically, if the Antagonist goes through $p^{x_0}p^{x_1}\ldots p^{x_{t-1}}$, the position of the ant will be 
\begin{align*}
\underbrace{h\circ\ldots \circ h}_{t-1}\circ p^{x_{t-1}}\circ\ldots\circ p^{x_1}\circ p^{x_0}(\zero) &= \underbrace{h\circ\ldots \circ h}_{t-1}\circ p^{x_{t-1}}\circ\ldots\circ p^{x_1}((1, I^{x_0}_0)) \\
&= \underbrace{h\circ\ldots \circ h}_{t-1}((t, I_0^{x_0} \oplus \ldots \oplus I_{t-1}^{x_{t-1}})\\
&= (1, I_0^{x_0} \oplus \ldots \oplus I_{t-1}^{x_{t-1}}).
\end{align*}
Likewise, if the Antagonist goes through $p^{y_0}p^{y_1}\ldots p^{y_{t-1}}$, the position of the ant will be 
\[\underbrace{h\circ\ldots \circ h}_{t-1}\circ p^{x_{t-1}}\circ\ldots\circ p^{x_1}\circ p^{x_0}(\zero) = (1, I_0^{y_0} \oplus \ldots \oplus I_{t-1}^{y_{t-1}}).\]

Since $I_0^{x_0} \oplus \ldots \oplus I_{t-1}^{x_{t-1}}\neq I_0^{y_0} \oplus \ldots \oplus I_{t-1}^{y_{t-1}}$ by \eqref{I}, we conclude that both $(1, 0)$ and $(1, 1)$ are possible positions of the ant at $w$. But once again, $S$ acts in the same way at $w$ in both cases.
Assume first that $S$ plays the $f_0$-edge when it first reaches $w$. Then $S$ loses if the ant reached $w$ being in $(1, 1)$. Indeed, after $S$ plays its first move at $w$, the position of the ant becomes $f_0((1, 1)) = (2, 1)$. If the first coordinate of the ant is 2 or more, both $f_0$ and $f_1$ increase it by 1. Hence, no matter what Protagonist does afterwards, the ant will get infinitely high in $\Omega$. Likewise, if the first move of $S$ at $w$ is the $f_1$-edge, then it loses if the ant reaches $w$ being in $(1, 0)$.

\end{document}